\newcommand{\keywords}[1]{\par\addvspace\baselineskip
\noindent\keywordname\enspace\ignorespaces#1}
\providecommand{\xv}{\mathbf{x}} 
\providecommand{\uv}{\mathbf{u}} 
\providecommand{\sv}{\mathbf{s}} 
 \providecommand{\Xv}{\mathbf{X}}
\providecommand{\hv}{\mathbf{h}} \providecommand{\gv}{\mathbf{g}}
\providecommand{\rv}{\mathbf{r}} \providecommand{\bv}{\mathbf{b}}
\providecommand{\Wc}{{\mathcal W}} 
 \providecommand{\Cc}{{\mathcal C}}
\providecommand{\Pc}{{\mathcal P}} \providecommand{\Rc}{{\mathcal R}}
\providecommand{\Ac}{{\mathcal A}} \providecommand{\Sc}{{\mathcal S}}
\providecommand{\Yt}{\widetilde{Y}} 
 \providecommand{\zt}{\widetilde{z}}
\providecommand{\gvt}{\widetilde{\gv}} \providecommand{\hvt}{\widetilde{\hv}}
\begin{document}
\mainmatter              % start of the contributions

% first contribution

\title{Multiple Antenna Secure Broadcast over Wireless Networks}
\titlerunning{Multiple Antenna Secure Broadcast over Wireless Networks}

\author{Ruoheng Liu\thanks{This research was supported by the National Science Foundation under Grants ANI-03-38807 and CNS-06-25637.}
\and H. Vincent Poor}

% use the command \index{<name>} for index entries

\institute{Department of Electrical Engineering, Princeton University\\
Princeton, NJ 08544, USA,\\
\email{\{rliu,poor\}@princeton.edu}}

\maketitle              % typeset the title of the contribution

\begin{abstract}

In wireless data networks, communication is particularly susceptible to
eavesdropping due to its broadcast nature. Security and privacy systems have
become critical for wireless providers and enterprise networks. This paper
considers the problem of secret communication over the Gaussian broadcast
channel, where a multi-antenna transmitter sends independent confidential
messages to two users with \emph{perfect secrecy}. That is, each user would
like to obtain its own message reliably and confidentially. First, a computable
Sato-type outer bound on the secrecy capacity region is provided for a
multi-antenna broadcast channel with confidential messages. Next, a dirty-paper
secure coding scheme and its simplified version are described. For each case,
the corresponding achievable rate region is derived under the perfect secrecy
requirement. Finally, two numerical examples demonstrate that the Sato-type
outer bound is consistent with the boundary of the simplified dirty-paper
coding secrecy rate region.

\keywords{secure communication, broadcast channels, multiple antennas}

\end{abstract}

\section{Introduction}

The need for efficient, reliable, and secure data communication over wireless
networks has been rising rapidly for decades. Due to its broadcast nature,
wireless communication is particularly susceptible to eavesdropping. The
inherent problematic nature of wireless networks exposes not only the risks and
vulnerabilities that a malicious user can exploit and severely compromise the
network but also multiplies information confidentiality concerns with respect
to in-network terminals. Hence, security and privacy systems have become
critical for wireless providers and enterprise networks.

In this work, we consider multiple antenna secure broadcast in wireless
networks. This research is inspired by the seminal paper \cite{Wyner:BSTJ:75},
in which Wyner introduced the so-called {\it wiretap channel} and proposed an
information theoretic approach to secure communication schemes. Under the
assumption that the channel to the eavesdropper is a degraded version of that
to the desired receiver, Wyner characterized the capacity-secrecy tradeoff for
the discrete memoryless wiretap channel and showed that secure communication is
possible without sharing a secret key. Later, the result was extended by
Csisz{\'{a}r and K{\"{o}rner who determined the secrecy capacity for the
non-degraded broadcast channel (BC) with a single confidential message set
intended for one of the users \cite{Csiszar:IT:78}.

%Latter, the result was extended to the scalar Gaussian wiretap channel in
%\cite{Cheong:IT:78}.

In more general wireless network scenarios, secure communication may involve
multiple users and multiple antennas. Motivated by wireless communications,
where transmitted signals are broadcast and can be received by all users within
the communication range, a significant research effort has been invested in the
study of the information-theoretic limits of secure communication in different
wireless network environments
\cite{Oohama:ITW:01,Csiszar:IT:04,Tekin:ISIT:06,Liang06it,Liu:ISIT:06,Liu:Allerton:06,Lai:IT:06,Li:CISS:07}.

This issues motivate us to study the multi-antenna Gaussian broadcast channel
with confidential messages (MGBC-CM), in which independent confidential
messages from a multi-antenna transmitter are to be communicated to two users.
The corresponding broadcast communication model is shown in Fig.~\ref{fig:gbc}.
Each user would like to obtain its own message reliably and confidentially.
\begin{figure}[t]
 \centerline{\includegraphics[width=0.65\linewidth,draft=false]{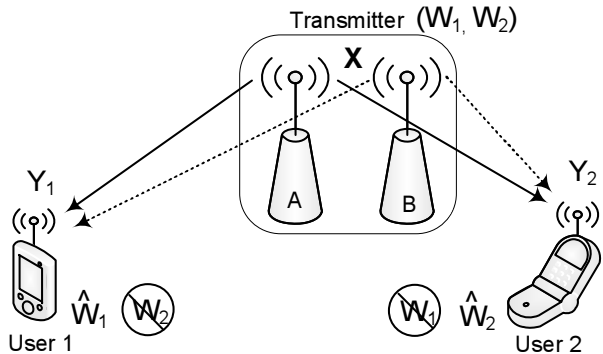}} \caption{
Multiple-antenna Gaussian broadcast channel with confidential message}
  \label{fig:gbc}
\end{figure}

To give insight into this problem, we first consider a single antenna Gaussian
broadcast channel (GBC). Note that this channel is degraded~\cite{Cover}, which
means that if a message can be successfully decoded by the inferior user, then
the superior user is also ensured of decoding it. Hence, the secrecy rate of
the inferior user is zero and this problem is reduced to the scalar Gaussian
wiretap channel problem~\cite{Cheong:IT:78} whose secrecy capacity is now the
maximum rate achievable by the superior user. This analysis gives rise to the
question: can the transmitter, in fact, communicate with both users
confidentially at nonzero rate under some other conditions? Roughly speaking,
the answer is in the affirmative. In particular, the transmitter can
communicate when equipped with sufficiently separated multiple antennas.

We here have two goals motivated directly by questions arising in practice. The
first is to determine the condition under which both users can obtain their own
messages reliably and confidentially. This is equivalent to evaluating the
secrecy capacity region for the MGBC-CM. The second is to show \emph{how} the
transmitter should broadcast securely, which is equivalent to designing an
achievable secure coding scheme. To this end, a computable Sato-type outer
bound on the secrecy capacity region is developed for the MGBC-CM in
Sec.~\ref{sec:out}. Next, a dirty-paper secure coding scheme and its simplified
version are described. For each case, the corresponding achievable rate region
is derived under perfect secrecy requirement in Sec.~\ref{sec:in}. Finally, two
numerical examples demonstrate that the Sato-type outer bound is consistent
with the boundary of the simplified dirty-paper coding (DPC) secrecy rate
region in Sec.~\ref{sec:ex}.

\section{System Model}

We consider the communication of confidential messages to two users over a
Gaussian broadcast channel via $t$ transmit-antennas. Each user is equipped
with a single receive-antenna. The received signals at user~1 and user~2 are
are modeled as
\begin{align}
y_{1,i}&=\hv^{H} \xv_i+z_{1,i} \notag\\
y_{2,i}&=\gv^{H} \xv_i+z_{2,i},\qquad i=1,\ldots,n \label{eq:miso}
\end{align}
where $\xv_i\in \mathbb{C}^{t}$ is the transmitted vector at time $i$,
$z_{1,i}$ and $z_{2,i}$ correspond to two independent, zero-mean,
unit-variance, complex Gaussian noise sequences, and $\hv,\,\gv\in
\mathbb{C}^{t}$ are channel attenuation vectors corresponding to user~1 and
user~2, respectively. The channel input is constrained by
\begin{align}
\frac{1}{n}\sum_{i=1}^{n} |\xv_i|^2\le A
\end{align}
where $A$ is the total transmit energy per channel use. We also assume that
both the transmitter and receivers are aware of the attenuation vectors $\hv$
and $\gv$.

As shown in Fig.~\ref{fig:gbc}, the transmitter intends to send an independent
confidential message $W_k\in\{1,\dots,M_k\}\triangleq \Wc_k$ to the respective
user $k\in\{1,\,2\}$ in $n$ channel uses. To increase the randomness of
transmitted messages, we consider a stochastic encoder at the transmitter. More
explicitly, the encoder is specified by a matrix of conditional probability
density $f(\xv^{n}|w_1,w_2)$, where $\xv^{n}=[\xv_1,\ldots,\xv_n]$ and $w_k\in
\Wc_k$. In other words, $f(\xv^{n}|w_1,w_2)$ is the probability density
associated with the conditional probability that the messages $(w_1,w_2)$ are
encoded as the channel input $\xv^{n}$.

The decoding function at user $k$ is a mapping $\phi_k\, :\,
\mathbb{C}^{n}\rightarrow \Wc_k$. The secrecy levels with respect to the
confidential messages $W_1$ and $W_2$ are measured, respectively, at receivers
$1$ and $2$ with respect to the normalized equivocations
\begin{equation} \label{eq:nq}
\frac{1}{n}H(W_2|Y_1^{n},W_1)\quad \text{and} \quad
\frac{1}{n}H(W_1|Y_2^{n},W_2).
\end{equation}

An $(M_1,M_2,n,P_e)$ code for the broadcast channel consists of the encoding
function $f$, decoding functions $\phi_1$, $\phi_2$, and the maximum error
probability $P_e\triangleq \max \{P_{e,1},\,P_{e,2}\}$, where $P_{e,k}$ is the
error probability for user $k$ given by
\begin{align}
P_{e,k}&=P\bigl[\phi_k(Y_k^{n})\neq w_k\bigr].
\end{align}

A rate pair $(R_1,\, R_2)$ is said to be achievable for the broadcast channel
with confidential messages if, for any $\epsilon>0$, there exists an $(M_1,
M_2, n, P_e)$ code that satisfies $P_e\le \epsilon$, $M_k\ge 2^{nR_k}$, for
$k=1,2$, and the perfect secrecy requirement
\begin{align}
H(W_1)-H(W_1|Y_2^{n},W_2)&\le n\epsilon \quad \text{and} \quad
H(W_2)-H(W_2|Y_1^{n},W_1)\le n\epsilon \label{eq:equiv}.
\end{align}

The secrecy capacity region of $\Cc_{\rm BCC}$ of the MGBC-CM is the closure of
the set of all achievable rate pairs $(R_1,\,R_2)$.

\section{Outer Bound on the Secrecy Capacity Region} \label{sec:out}

\subsection{Sato-Type Outer Bound}

We first consider a Sato-type bound that can be applied to both discrete
memoryless and Gaussian broadcast channels with confidential messages (BC-CM).
Let $\Pc$ be the set of channels $P_{\Yt_1,\Yt_2|\Xv}$ that have the same
marginal distributions as $P_{Y_1,Y_2|\Xv}$, i.e.,
\begin{align}
P_{\Yt_1|\Xv}(y_1|\xv)=P(y_1|\xv) \quad \text{and} \quad
P_{\Yt_2|\Xv}(y_2|\xv)=P(y_2|\xv)
\end{align}
for all $y_1$, $y_2$ and $\xv$. Let $\Rc_{\rm
O}\bigl(P_{\Yt_1,\Yt_2|\Xv},\,P_{\Xv}\bigr)$ denote the union of all rate pairs
$(R_1,\,R_2)$ satisfying
\begin{align}
R_1&\le I(\Xv;\Yt_1|\Yt_2) \quad \text{and} \quad R_2\le I(\Xv;\Yt_2|\Yt_1)
\end{align}
for given distributions $P_{\Xv}$ and $P_{\Yt_1,\Yt_2|\Xv}$.

\begin{theorem} \label{thm:out1} %{\bf (outer bound)}
The secrecy capacity region $\Cc_{\rm BCC}$ of the BC-CM satisfies
\begin{align}
\Cc_{\rm BCC} \subseteq \bigcap_{P_{\Yt_1,\Yt_2|\Xv}\in \Pc}
\left\{\bigcup_{P_{\Xv}} \Rc_{\rm
O}\bigl(P_{\Yt_1,\Yt_2|\Xv},\,P_{\Xv}\bigr)\right\}. \label{eq:Sato}
\end{align}
\end{theorem}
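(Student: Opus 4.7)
The plan is to adapt the classical Sato outer-bound argument to the secrecy setting, exploiting the fact that both the reliability (decoding error probability) and the perfect secrecy constraint depend only on the marginal conditional distributions $P_{Y_1|\Xv}$ and $P_{Y_2|\Xv}$. Hence for any fictitious joint channel $P_{\Yt_1,\Yt_2|\Xv}\in\Pc$, any code achieving rate pair $(R_1,R_2)$ on the true channel also achieves it on the fictitious one in the sense that $H(W_k|\Yt_k^n)\le n\delta_n$ holds by Fano's inequality, and the secrecy inequality $I(W_1;\Yt_2^n|W_2)\le n\epsilon$ (and symmetrically for $W_2$) carries over verbatim.

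Fix an achievable pair $(R_1,R_2)$ and an arbitrary memoryless fictitious channel $P_{\Yt_1,\Yt_2|\Xv}\in\Pc$. The first key step is to bound $nR_1$ by starting from $nR_1=H(W_1)=H(W_1|W_2)$ (using independence of $W_1,W_2$), applying Fano to obtain $nR_1\le I(W_1;\Yt_1^n|W_2)+n\delta_n$, and then expanding:
\begin{align}
I(W_1;\Yt_1^n|W_2)\le I(W_1;\Yt_1^n,\Yt_2^n|W_2)=I(W_1;\Yt_2^n|W_2)+I(W_1;\Yt_1^n|\Yt_2^n,W_2). \notag
\end{align}
The first term on the right is controlled by the secrecy constraint ($\le n\epsilon$), and the second is bounded by $I(\Xv^n;\Yt_1^n|\Yt_2^n)$ after inserting $\Xv^n$ via the chain rule and using the Markov chain $(W_1,W_2)\to\Xv^n\to(\Yt_1^n,\Yt_2^n)$, which makes $I(W_1,W_2;\Yt_1^n|\Yt_2^n,\Xv^n)=0$. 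This yields $nR_1\le I(\Xv^n;\Yt_1^n|\Yt_2^n)+n(\epsilon+\delta_n)$; the symmetric argument gives the analogous bound for $R_2$.

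The second step is single-letterization. Since the fictitious channel is memoryless, $H(\Yt_1^n|\Xv^n,\Yt_2^n)=\sum_i H(\Yt_{1,i}|\Xv_i,\Yt_{2,i})$, while $H(\Yt_1^n|\Yt_2^n)\le\sum_i H(\Yt_{1,i}|\Yt_{2,i})$ by the chain rule and conditioning-reduces-entropy, so that
\begin{align}
I(\Xv^n;\Yt_1^n|\Yt_2^n)\le\sum_{i=1}^{n} I(\Xv_i;\Yt_{1,i}|\Yt_{2,i}). \notag
\end{align}
Introducing a time-sharing index $Q$ uniform on $\{1,\ldots,n\}$ and setting $\Xv=\Xv_Q$ (with $Q$ absorbed into the input), this average becomes the single-letter expression $I(\Xv;\Yt_1|\Yt_2)$ for the induced $P_\Xv$; the same $P_\Xv$ simultaneously yields $R_2\le I(\Xv;\Yt_2|\Yt_1)$. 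Letting $\epsilon,\delta_n\to 0$ and intersecting over all $P_{\Yt_1,\Yt_2|\Xv}\in\Pc$ produces \eqref{eq:Sato}.

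\textbf{Main obstacle.} The delicate point is not any single inequality but the bookkeeping that lets the bound depend on the \emph{joint} $P_{\Yt_1,\Yt_2|\Xv}$ while the code and its secrecy/reliability guarantees only see the true marginals; this hinges on the invariance of Fano and of $H(W_1|Y_2^n,W_2)$ under replacement of the true joint by any $P_{\Yt_1,\Yt_2|\Xv}\in\Pc$. The secondary technical concern is producing a \emph{common} input distribution $P_\Xv$ that validates both $R_1$ and $R_2$ bounds simultaneously after single-letterization, which is handled by the shared time-sharing construction above.
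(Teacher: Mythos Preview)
Your proposal is correct and follows essentially the same Sato-type argument as the paper: combine Fano's inequality with the secrecy constraint to reach $I(\Xv^n;\Yt_1^n|\Yt_2^n)$, single-letterize via memorylessness, and invoke marginal-invariance to replace $(Y_1,Y_2)$ by any $(\Yt_1,\Yt_2)$ with the same marginals. The only cosmetic difference is the order of operations: the paper first applies the secrecy bound $nR_1\le H(W_1|Y_2^n)+n\epsilon$ (dropping the $W_2$ conditioning immediately), then subtracts $H(W_1|Y_1^n,Y_2^n)$ via Fano to obtain $I(W_1;Y_1^n|Y_2^n)$ directly, whereas you apply Fano first, keep the $W_2$ conditioning, and split $I(W_1;\Yt_1^n,\Yt_2^n|W_2)$ via the chain rule before invoking secrecy on $I(W_1;\Yt_2^n|W_2)$; both routes land on the same multi-letter bound. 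Your treatment is in fact slightly more careful than the paper's in two places: you spell out the single-letterization inequality and you explicitly address, via the shared time-sharing variable $Q$, that the \emph{same} induced $P_\Xv$ validates both the $R_1$ and $R_2$ bounds simultaneously---a point the paper's appendix leaves implicit.
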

\begin{proof}
See the Appendix.
\end{proof}
\begin{remark}
The outer bound (\ref{eq:Sato}) follows by letting the users decode the message
in a \emph{cooperative} manner, while evaluating the secrecy level in an
individual manner. Hence, the eavesdropped signal is always a degraded version
of the entire received signal. This permits the use of the wiretap channel
result of \cite{Wyner:BSTJ:75}.
\end{remark}
\begin{remark}
Although Theorem~\ref{thm:out1} is based on a \emph{degraded} argument, the
outer bound (\ref{eq:Sato}) can be applied to \emph{general} broadcast channels
with confidential messages.
\end{remark}

\subsection{Sato-Type Outer Bound for the Gaussian BC-CM}

For the Gaussian channel case, we can simplify the outer bound (\ref{eq:Sato})
using the following steps. First, the channel family  $\Pc$ is the set of
channels~(\ref{eq:miso}) where $z_1$ and $z_2$  are replaced by arbitrarily
correlated, zero-mean, unit-variance, Gaussian random variables $\zt_1$ and
$\zt_2$ with covariance $\nu$. Furthermore, we consider a new coordinate
transform as the setting of \cite{Li:CISS:07} and rewrite the broadcast channel
model (\ref{eq:miso}) as follows:
\begin{align}
y_1&=|\hv| s_1+\zt_1\notag\\
y_2&=\alpha |\gv| s_1+\sqrt{1-|\alpha|^2} |\gv| s_2+\zt_2. \label{eq:model}
\end{align}
where
\begin{align} \alpha&=\frac{\gv^{H}\hv}{|\hv||\gv|},\quad
s_1=\frac{\hv^{H}\xv}{|\hv|}\quad \text{and}~ s_2=\frac{(|\hv|\gv-\alpha
|\gv|\hv)^{H}\xv}{\sqrt{1-|\alpha|^2} |\hv||\gv|}.
\end{align}
From now on, we define
\begin{align}
\sv=\left[\begin{matrix}{s_1} \\{s_2}\end{matrix}\right],\quad
\hvt=|\hv|\bigl[1,\,0\bigr]^{H}\quad \text{and} \quad
\gvt=|\gv|\bigl[\alpha,\,\sqrt{1-|\alpha|^2}\bigr]^{H}.
\end{align}
The vector $\sv$ can be interpreted as the projection of $\xv$ onto the
subspace spanned by $\hv$ and $\gv$, or more precisely, the projection of $\xv$
onto the $2$ orthonormal bases
\begin{align}
\rv_1=\frac{\hv}{|\hv|} \quad \text{and} \quad \rv_2=\frac{|\hv|\gv-\alpha
|\gv|\hv}{\sqrt{1-|\alpha|^2} |\hv||\gv|}.
\end{align}
Since the projection operation cannot increase the length of a vector, the
covariance matrix $K_{S_1,S_2}$ satisfies the input constraint ${\rm
tr}(K_{S_1,S_2})\le A$. We also note that the Markov chain property $\Xv
\rightarrow (S_1,S_2) \rightarrow (\Yt_1,\Yt_2)$ holds, and hence,
\begin{align}
I(\Xv;\Yt_1|\Yt_2)=I(S_1,S_2;\Yt_1|\Yt_2)\quad\text{and}\quad
I(\Xv;\Yt_2|\Yt_1)=I(S_1,S_2;\Yt_2|\Yt_1).
\end{align}
Following \cite{Cheong:IT:78}, it can be shown that Gaussian input
distributions maximize ${\Rc}_{\rm O}$ by applying the maximum-entropy theorem
\cite{Cover}. Hence, we restrict attention to a zero-mean Gaussian pair
$(S_1,\,S_2)$ with the covariance matrix $K_{S_1,S_2}$. These facts are
summarized in the following.

For a multi-antenna Gaussian broadcast channel, the rate region ${\Rc}_{\rm O}$
is a function of the noise covariance $\nu$ and the input covariance matrix
$K_{S_1,S_2}$, i.e., ${\Rc}_{\rm O}(\nu,\,K_{S_1,\,S_2})$ is the union of all
rate pairs $(R_1,\,R_2)$ satisfying
\begin{align}
R_1&\le \log_2\frac{K_{\Yt_1,\Yt_2}} {(1-|\nu|^2)(1+\gvt^{H}K_{S_1,S_2}\gvt)}
\end{align}
and
\begin{align}
R_2&\le \log_2\frac{K_{\Yt_1,\Yt_2}} {(1-|\nu|^2)(1+\hvt^{H}K_{S_1,S_2}\hvt)}
\end{align}
where $K_{\Yt_1,\Yt_2}$ is the covariance matrix of $\Yt_1$ and $\Yt_2$. This
covariance matrix is given explicitly by
\begin{align}
K_{\Yt_1,\Yt_2}&=\sqrt{1-|\alpha|^2}|\hvt|^2|\gvt|^2|K_{S_1,S_2}|+\hvt^{H}K_{S_1,S_2}\hvt+\gvt^{H}K_{S_1,S_2}\gvt\notag\\
&\qquad -\gvt^{H}K_{S_1,S_2}(\nu\hvt) - (\nu\hvt)^{H}K_{S_1,S_2}\gvt+1-|\nu|^2.
\end{align}
\begin{theorem} \label{thm:outG}
For an MGBC-CM, the secrecy capacity region $\Cc_{\rm BCC}$ satisfies
\begin{align}
\Cc_{\rm BCC} \subseteq \bigcap_{0\le|\nu|\le1} \left\{\bigcup_{K_{S_1,S_2}\in
\Ac} {\Rc}_{\rm O}(\nu,\,K_{S_1,S_2})\right\} \label{eq:Sato2}
\end{align}
where $\Ac$ is the set of all covariance matrices satisfying the input
constraint ${\rm tr}(K_{S_1,S_2})\le A$.
\end{theorem}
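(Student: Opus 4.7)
The plan is to specialize the generic Sato-type bound of Theorem~\ref{thm:out1} to the MGBC-CM in three moves: parametrize the family~$\Pc$, reduce the channel input to the two-dimensional projection $\sv$, and invoke a maximum-entropy argument to restrict the union to Gaussian $\sv$ with bounded covariance.

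First, because each marginal $P_{\Yt_k|\Xv}$ must agree with the corresponding marginal of the original channel, $\zt_1$ and $\zt_2$ are forced to be zero-mean unit-variance complex Gaussian, and the only remaining degree of freedom in $\Pc$ is their cross-covariance $\nu$ with $|\nu|\le 1$. The outer intersection in~(\ref{eq:Sato}) therefore collapses to $\bigcap_{0\le|\nu|\le 1}$. Next, the coordinate change~(\ref{eq:model}) makes explicit that $(\Yt_1,\Yt_2)$ depends on $\Xv$ only through $\sv=(s_1,s_2)^{H}$, so $\Xv\rightarrow\sv\rightarrow(\Yt_1,\Yt_2)$ is Markov and $I(\Xv;\Yt_k|\Yt_j)=I(\sv;\Yt_k|\Yt_j)$ by data processing. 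The inner union over $P_{\Xv}$ can then be replaced by a union over $P_{\sv}$; moreover, since the projection of $\xv$ onto the orthonormal pair $\rv_1,\rv_2$ cannot increase norm, $\E[|\sv|^2]\le\E[|\xv|^2]\le A$, yielding ${\rm tr}(K_{S_1,S_2})\le A$, i.e.\ $K_{S_1,S_2}\in\Ac$.

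Second, for fixed $\nu$ and fixed covariance $K_{S_1,S_2}$, I would argue that both rate bounds are simultaneously maximized by a zero-mean Gaussian $\sv$. Writing $I(\sv;\Yt_1|\Yt_2)=h(\Yt_1|\Yt_2)-h(\zt_1|\zt_2)$ and symmetrically for the other rate, the subtracted noise entropies depend only on $\nu$, while $h(\Yt_k|\Yt_j)\le h(\Yt_k-\beta_{kj}\Yt_j)$ with $\beta_{kj}$ the MMSE coefficient, whose residual variance is a function of $(K_{S_1,S_2},\nu)$ alone. By the maximum-entropy theorem~\cite{Cover}, this bound is attained by a Gaussian $\sv$ with covariance $K_{S_1,S_2}$, and attained simultaneously in both directions. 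Substituting jointly Gaussian $(S_1,S_2,\Yt_1,\Yt_2)$ and evaluating the resulting log-determinants then produces the explicit $\log_2$ expressions for $R_1,R_2$, with $K_{\Yt_1,\Yt_2}$ the stated closed-form covariance.

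The main obstacle is the simultaneous Gaussian optimality in the second step: the two rate inequalities define a single rectangle $\Rc_{\rm O}$, so one must ensure that one and the same input distribution maximizes both $h(\Yt_1|\Yt_2)$ and $h(\Yt_2|\Yt_1)$, rather than optimizing each in isolation and losing the coupling. The MMSE reformulation circumvents this by expressing each conditional entropy as the entropy of a single linear combination of $\Yt_1$ and $\Yt_2$, whose variance is a fixed function of $(K_{S_1,S_2},\nu)$; hence Gaussianizing $\sv$ cannot decrease either entropy, and the union in~(\ref{eq:Sato2}) indexed by $K_{S_1,S_2}\in\Ac$ is justified.
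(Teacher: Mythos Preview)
Your proposal is correct and follows essentially the same route as the paper: the paper's proof of Theorem~\ref{thm:outG} is the one-line remark that it ``follows from Theorem~\ref{thm:out1} and the fact that the optimum input distribution is Gaussian,'' with the surrounding text supplying exactly the three ingredients you list---parametrizing $\Pc$ by the noise cross-covariance $\nu$, reducing $\Xv$ to the two-dimensional projection $\sv$ via the Markov chain, and invoking the maximum-entropy theorem (citing \cite{Cheong:IT:78,Cover}) for Gaussian optimality. Your MMSE-based justification of the simultaneous Gaussian optimality of both conditional entropies is more explicit than what the paper actually writes, but it is the natural way to unpack the paper's reference to \cite{Cheong:IT:78}, so there is no substantive divergence.
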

\begin{proof}
Theorem~\ref{thm:outG} follows from Theorem~\ref{thm:out1} and the fact that
the optimum input distribute is Gaussian.
\end{proof}
\begin{remark}
Theorem~\ref{thm:outG} describes a computable outer bound of the secrecy
capacity region for the MGBC-CM. The bound (\ref{eq:Sato2}) provides a
benchmark for evaluating the goodness of achievable coding schemes described in
next section.
\end{remark}

\section{Inner Bound and Achievable Coding Scheme} \label{sec:in}

\subsection{Inner Bound for the BC-CM}
An inner bound for the BC-CM has been established in \cite[Theorem~3]{Liu07it}.
Here we first review the result as follows. Let $V_1$ and $V_2$ be auxiliary
random variables. We define $\Omega$ as the class of joint probability
densities $p(v_1,v_2,\xv,y_1,y_2)$ that factor as
$p(v_1,v_2)p(\xv|v_1,v_2)p(y_1,y_2|\xv).$ Let ${\Rc}_{\rm I}(\pi)$ denote the
union of all $(R_1,R_2)$ satisfying
\begin{align}
0 &\le R_1 \le I(V_1;Y_1)-I(V_1;Y_2|V_2)-I(V_1;V_2) \label{eq:BC-IN-R1}
\end{align} and
\begin{align}
 0 &\le R_2 \le
I(V_2;Y_2)-I(V_2;Y_1|V_1)-I(V_1;V_2) \label{eq:BC-IN-R2}
\end{align}
for a given joint probability density $\pi\in \Omega$.

\begin{theorem} {\rm \cite[Theorem~3]{Liu07it}} \label{thm:InBC} %{\bf (inner bound)}
Any rate pair
\begin{align}
(R_1,R_2)\in {\rm co} \left\{\bigcup_{\pi\in\Omega} \Rc_{\rm I}(\pi)\right\}
\label{eq:inner}
\end{align}
is achievable for the broadcast channel with confidential messages, where ${\rm
co}\{\Sc\}$ denotes the convex hull of the set $\Sc$.
\end{theorem}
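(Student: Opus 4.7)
The plan is to combine Marton's superposition/binning scheme for the general broadcast channel with Wyner's stochastic binning for the wiretap channel, resulting in a double-binning construction on two auxiliary codebooks. Fix a joint density $\pi(v_1,v_2,\xv,y_1,y_2)\in\Omega$ and a small $\epsilon>0$. I would generate, for $k=1,2$, an i.i.d.\ codebook of $2^{n\tilde R_k}$ sequences $V_k^n$ drawn from $\prod_{i=1}^n p(v_{k,i})$, where $\tilde R_k$ is chosen just below $I(V_k;Y_k)-\epsilon$. Partition this codebook into three nested layers: an outer bin indexed by the confidential message $w_k\in\{1,\dots,2^{nR_k}\}$; a middle ``secrecy'' bin of size $2^{nI(V_k;Y_{3-k}|V_{3-k})+n\epsilon_1}$ used to confuse the other user; and an innermost ``Marton'' sub-bin of size related to $I(V_1;V_2)$ that provides the randomness needed to find jointly typical $(V_1^n,V_2^n)$ pairs across the two codebooks.

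For encoding, given $(w_1,w_2)$, the transmitter searches in the $w_1$-bin of codebook~$1$ and the $w_2$-bin of codebook~$2$ for a pair $(V_1^n,V_2^n)$ that is jointly $\epsilon$-typical under $p(v_1,v_2)$; standard Marton-type counting shows this succeeds with probability $\to 1$ provided the total rate of randomness in each codebook exceeds $I(V_1;V_2)$, which is exactly the role of the $-I(V_1;V_2)$ term in (\ref{eq:BC-IN-R1})--(\ref{eq:BC-IN-R2}). The channel input $\xv^n$ is then drawn according to $\prod_i p(\xv_i|v_{1,i},v_{2,i})$. For decoding, user~$k$ looks for the unique $V_k^n$ in codebook~$k$ jointly typical with $Y_k^n$; since $\tilde R_k<I(V_k;Y_k)$, the usual packing lemma gives vanishing error probability, so the reliability constraints are handled by routine joint-typicality arguments.

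The real work, and the main obstacle, is verifying the perfect-secrecy conditions (\ref{eq:equiv}). I would bound $H(W_1|Y_2^n,W_2)$ from below by conditioning on the chosen $V_2^n$ (which $W_2$ determines up to the inner binning randomness) and then showing $H(W_1|Y_2^n,V_2^n)\ge H(W_1)-n\epsilon$. The key is a Wyner-type lemma: given $V_2^n$, the $V_1^n$ codebook projected through the channel to $Y_2$ behaves like a wiretap channel with leakage rate $I(V_1;Y_2|V_2)$, and since each message bin of codebook~$1$ contains roughly $2^{n I(V_1;Y_2|V_2)}$ secrecy-sub-bin representatives, Fano-type and list-decoding arguments show that $Y_2^n$ leaves essentially uniform uncertainty over the $2^{nR_1}$ outer bins. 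The symmetric argument yields the bound on $H(W_2|Y_1^n,W_1)$.

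Finally, time-sharing between distributions $\pi\in\Omega$ produces the convex-hull closure, and letting $\epsilon\downarrow 0$ gives achievability of every pair in (\ref{eq:inner}). The step I expect to be most delicate is the simultaneous equivocation bound: unlike the single-user wiretap setting, the two secrecy requirements couple through the joint codebook selection in the Marton step, so one must verify that the Marton randomness does not leak additional information about $W_1$ via $V_2^n$ (and vice versa) beyond what the secrecy sub-bins already protect.
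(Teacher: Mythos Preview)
Your proposal is essentially the approach the paper has in mind. The paper does not reproduce a full proof here; it cites \cite{Liu07it} and only sketches the idea, describing the scheme as a \emph{double-binning} construction that ``enables both joint encoding at the transmitter by using Slepian-Wolf binning and preserving confidentiality by using random binning,'' together with the heuristic
\[
R_1\le [I(V_1;Y_1)+I(V_2;Y_2)-I(V_1;V_2)]-I(V_1,V_2;Y_2),
\]
i.e., Marton's sum rate minus what user~2 can extract. Your three-layer codebook (message bin, Wyner secrecy sub-bin of rate $\approx I(V_k;Y_{3-k}|V_{3-k})$, Marton sub-bin tied to $I(V_1;V_2)$), the joint-typicality search across the two codebooks, and the time-sharing step all match that description, and you correctly flag the simultaneous equivocation analysis as the delicate part. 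One small point to tighten when you write it out: be explicit that the search for a jointly typical $(V_1^n,V_2^n)$ pair and the secrecy accounting must be set up so that, conditioned on $V_2^n$, the \emph{effective} number of $V_1^n$ candidates per message bin still saturates user~2's decoding ability $I(V_1;Y_2|V_2)$; this is exactly why the $-I(V_1;V_2)$ penalty appears in \emph{both} rate constraints (the paper's ``pay double'' remark), rather than being split between them as in the ordinary Marton region.
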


The proof of Theorem~\ref{thm:InBC} can be found in \cite{Liu07it}. Here, we
provide an alternative view on this result. The best known achievable rate for
a general BC was found by Marton in \cite{marton:IT:77}. For a given joint
density $p(v_1,v_2,\xv)$, the Marton sum rate (without a common rate) is given
by
$$I(V_1;Y_1)+I(V_2;Y_2)-I(V_1;V_2).$$ On the other hand, the total (both
intended and eavesdropped) information rate obtained by user $2$ is bounded by
$I(V_1,V_2;Y_2).$ This implies that to satisfy the perfect secrecy requirement,
the achievable secrecy rate of user $1$ can be written as
\begin{align*}
R_1\le [I(V_1;Y_1)+I(V_2;Y_2)-I(V_1;V_2)]-I(V_1,V_2;Y_2)
\end{align*}
which leads to the bound in (\ref{eq:BC-IN-R1}).

\begin{remark}
We note that for a broadcast channel, we can employ joint encoding at the
transmitter. However, to preserve confidentiality, both achievable rates in
(\ref{eq:BC-IN-R1}) and (\ref{eq:BC-IN-R2}) include a penalty term
$I(V_1;V_2)$. Hence, compared with Marton's sum rate bound \cite{marton:IT:77},
here, we pay ``double'' to jointly encode at the transmitter.
\end{remark}

\subsection{Dirty Paper Coding Scheme for the MGBC-CM}
The achievable strategy in Theorem~\ref{thm:InBC} introduces a {\it
double-binning} coding scheme that enables both joint encoding at the
transmitter by using Slepian-Wolf binning \cite{Slepian:IT:73} and preserving
confidentiality by using random binning. However, when the rate region
(\ref{eq:inner}) is used as a constructive technique, it not clear how to
choose the auxiliary random variables $V_1$ and $V_2$ to implement the double
binning codebook, and hence, one has to ``guess'' the density of
$p(v_1,v_2,\xv)$. Here, we employ the DPC strategy to study the achievable
secrecy rate region.

For the MGBC-CM, we focus on the new coordinate channel model (\ref{eq:model})
and let
\begin{align}
\xv=s_1\rv_1+s_2 \rv_2.
\end{align}
Hence, the vector $\sv$ can be viewed as a precoded signal for $\xv$. We
generate signal $\sv$ by dirty paper encoding with Gaussian codebooks
\cite{GelfandPinsker80,Costa:IT:83} as follows.

First, sperate the precoded signal $\sv$ into two vectors so that
\begin{align}
\uv_1=\left[\begin{matrix}{u_{1,1}} \\{u_{1,2}}\end{matrix}\right],\quad
\uv_2=\left[\begin{matrix}{u_{2,1}}
\\{u_{2,2}}\end{matrix}\right] \quad \text{and} \quad \uv_1+\uv_2=\sv.
\end{align}
Let $U_1$ and $U_2$ denote random variables corresponding to $\uv_1$ and
$\uv_2$, respectively. We choose $U_1$ and $U_2$ as well as auxiliary random
variables $V_1$ and $V_2$ as follows:
\begin{align}
U_1&\thicksim \mathcal{CN}(0,K_{U_1}),\notag\\
U_2&\thicksim \mathcal{CN}(0,K_{U_2}),~\text{independent of }U_1\notag\\
V_1&=U_1\quad \text{and}\quad V_2=U_2+\bv \gvt^{H}U_1 \label{eq:rvs}
\end{align}
where $K_{U_1}$ and $K_{U_2}$ are covariance matrices of $U_1$ and $U_2$,
respectively, and
\begin{align}
\bv=K_{U_2}\gvt(1+\gvt^{H} K_{U_2} \gvt)^{-1}.
\end{align}
Based on the conditions (\ref{eq:rvs}) and Theorem~\ref{thm:InBC}, we obtain a
DPC rate region for the MGBC-CM as follows.

\begin{theorem}\label{thm:GBCin} {\rm [DPC region]}
Let  ${\Rc}_{\rm I}^{\rm [DPC]}(K_{U_1},K_{U_2})$ denote the union of all
$(R_1,R_2)$ satisfying
\begin{align}
0&\le R_1\le \log_2 \frac{1+\hvt^{H} (K_{U_1}+K_{U_2}) \hvt}{1+\hvt^{H} K_{U_2}
\hvt}-\log_2 \frac{1+\gvt^{H} (K_{U_1}+K_{U_2}) \gvt}{1+\gvt^{H} K_{U_2}\gvt}
\end{align}
and
\begin{align}
0&\le R_2\le \log_2 \frac{1+\gvt^{H} K_{U_2} \gvt}{1+\hvt^{H} K_{U_2} \hvt}.
\end{align}
Then, any rate pair
\begin{align}
(R_1,R_2)\in {\rm co} \left\{\bigcup_{{\rm tr}(K_{U_1}+K_{U_2})\le A} \Rc_{\rm
I}(K_{U_1},K_{U_2})\right\} \label{eq:dpc-r}
\end{align}
is achievable for the multi-antenna Gaussian broadcast channel with
confidential messages.
\end{theorem}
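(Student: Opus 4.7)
The plan is to specialize Theorem~\ref{thm:InBC} to the Gaussian auxiliary variables defined in (\ref{eq:rvs}) and reduce the bound to four Gaussian mutual information computations. First I verify admissibility: the joint density induced by $U_1,U_2 \sim \mathcal{CN}$ independent, $V_1=U_1$, $V_2=U_2+\bv\gvt^{H}U_1$, and $\Xv=s_1\rv_1+s_2\rv_2$ with $\sv=\uv_1+\uv_2$ factors as $p(v_1,v_2)p(\xv|v_1,v_2)p(y_1,y_2|\xv)$, so $\pi \in \Omega$. The power constraint ${\rm tr}(K_{S_1,S_2}) \le A$ translates to ${\rm tr}(K_{U_1}+K_{U_2}) \le A$ since $\uv_1$ and $\uv_2$ are independent. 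This reduces the problem to evaluating the four mutual informations appearing in (\ref{eq:BC-IN-R1})--(\ref{eq:BC-IN-R2}).

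Two of the four terms are direct Gaussian computations. With $Y_1=\hvt^{H}(U_1+U_2)+z_1$, treating $\hvt^{H}U_2+z_1$ as effective noise of variance $1+\hvt^{H}K_{U_2}\hvt$ gives $I(V_1;Y_1)=\log_2\frac{1+\hvt^{H}(K_{U_1}+K_{U_2})\hvt}{1+\hvt^{H}K_{U_2}\hvt}$, which is exactly the first term of the claimed $R_1$ bound. For $I(V_2;Y_1|V_1)$, conditioning on $V_1=U_1$ renders the shift $\bv\gvt^{H}U_1$ known, so $(V_2|V_1)$ is equivalent to $U_2$ and $(Y_1|V_1)$ is equivalent to $\hvt^{H}U_2+z_1$; thus $I(V_2;Y_1|V_1)=\log_2(1+\hvt^{H}K_{U_2}\hvt)$, the second subtracted term in the $R_2$ bound.

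The two remaining terms exploit the dirty-paper structure. For user~2, the choice $\bv=K_{U_2}\gvt(1+\gvt^{H}K_{U_2}\gvt)^{-1}$ is precisely the Costa MMSE coefficient applied along the $\gvt$-direction; by the standard Gelfand--Pinsker/Costa calculation, $I(V_2;Y_2)-I(V_1;V_2)=\log_2(1+\gvt^{H}K_{U_2}\gvt)$, the interference-free rate. Combining with step~2 yields the $R_2$ bound. For the eavesdropping penalty in the $R_1$ bound, I rewrite $I(V_1;Y_2|V_2)+I(V_1;V_2)=I(V_1;V_2,Y_2)$ and observe that both $V_2$ and $Y_2$ depend on $V_1=U_1$ only through the scalar $c=\gvt^{H}U_1$, yielding the Markov chain $V_1 \to c \to (V_2,Y_2)$ and hence $I(V_1;V_2,Y_2)=I(c;V_2,Y_2)$. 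Since $(c,V_2,Y_2)$ are jointly complex Gaussian, an LMMSE computation with Schur complement applied to the noise covariance $K_\xi$ of $(U_2,\gvt^{H}U_2+z_2)$ gives a posterior variance $\sigma^2_{c|V_2,Y_2}=\gvt^{H}K_{U_1}\gvt(1+\gvt^{H}K_{U_2}\gvt)/(1+\gvt^{H}(K_{U_1}+K_{U_2})\gvt)$, and therefore $I(V_1;V_2,Y_2)=\log_2\frac{1+\gvt^{H}(K_{U_1}+K_{U_2})\gvt}{1+\gvt^{H}K_{U_2}\gvt}$, matching the second term of the $R_1$ bound.

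The hardest step is the eavesdropping-cost calculation for $R_1$, since the naive computation of $I(V_1;V_2,Y_2)$ involves inverting a $3\times 3$ (after stacking $V_2\in\mathbb{C}^2$ with $Y_2\in\mathbb{C}$) Gaussian covariance. The crucial simplification is the Markov reduction to the scalar $c=\gvt^{H}U_1$: it turns an ugly vector LMMSE into a one-dimensional estimation problem where the specific Costa choice of $\bv$ makes the Schur-complement denominator collapse to $1$, producing the clean ratio. Once the four information quantities are in hand, the rate region ${\Rc}_{\rm I}^{\rm [DPC]}(K_{U_1},K_{U_2})$ follows by substitution into (\ref{eq:BC-IN-R1})--(\ref{eq:BC-IN-R2}); taking the union over admissible $(K_{U_1},K_{U_2})$ and the convex hull (time sharing, inherited from Theorem~\ref{thm:InBC}) yields the region (\ref{eq:dpc-r}).
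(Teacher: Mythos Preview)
Your proposal is correct and follows the paper's route exactly: specialize Theorem~\ref{thm:InBC} to the Gaussian auxiliaries in (\ref{eq:rvs}), check the power constraint via ${\rm tr}(K_{U_1}+K_{U_2})={\rm tr}(K_{S_1,S_2})\le A$, and evaluate the four mutual-information terms. Three of your four computations match the paper line for line.

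The only difference is the term you single out as hardest, $I(V_1;Y_2|V_2)+I(V_1;V_2)$. You rewrite it as $I(V_1;V_2,Y_2)$, reduce to the scalar $c=\gvt^{H}U_1$ via a Markov argument, and run an LMMSE/Schur-complement calculation. The paper instead uses the alternative chain-rule split
\[
I(V_1;Y_2|V_2)+I(V_1;V_2)=I(V_1,V_2;Y_2)-\bigl[I(V_2;Y_2)-I(V_1;V_2)\bigr].
\]
Because $(V_1,V_2)\leftrightarrow(U_1,U_2)$ is an invertible affine map, $I(V_1,V_2;Y_2)=I(\sv;Y_2)=\log_2\bigl(1+\gvt^{H}(K_{U_1}+K_{U_2})\gvt\bigr)$ immediately, and the bracketed quantity is the Costa rate $\log_2(1+\gvt^{H}K_{U_2}\gvt)$ you already computed; no covariance inversion is needed. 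Your route is valid, but the paper's decomposition turns your ``hardest step'' into a one-line subtraction of two quantities already in hand.
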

\begin{proof}
See the Appendix.
\end{proof}
\begin{remark}
In general, the set of achievable secrecy rates may be increased by considering
another new coordinate channel with the bases
\begin{align}
\rv'_1=\frac{\gv}{|\gv|} \quad \text{and} \quad
\rv'_2=\frac{|\gv|\hv-\alpha^{*} |\hv|\gv}{\sqrt{1-|\alpha|^2} |\hv||\gv|}
\end{align}
and reversing the roles of user $1$ and user $2$.
\end{remark}

\subsection{Simplified Dirty Paper Coding Scheme}

The DPC secrecy rate region (\ref{eq:dpc-r}) requires optimization of the
covariance matrices $K_{U_1}$ and $K_{U_2}$. Here, we consider a simplified DPC
scheme as follows. Let
\begin{align}
A_{i,j}=E[|U_{i,j}|^2], \quad \text{for}~i=1,2~\text{and}~j=1,2 \label{eq:aa}
\end{align}
where $U_{i,j}$ denotes the random variable corresponding to $u_{i,j}$. The
channel input power constraint implies that
\begin{align}
A_{1,1}+A_{1,2}+A_{2,1}+A_{2,2}\le A.
\end{align}
In particular, we choose the normalized correlation coefficients as
\begin{align}
\rho_1\triangleq \frac{E[U_{1,1}U_{1,2}^{*}]}{\sqrt{A_{1,1}A_{1,2}}}
=-\frac{\alpha^{*}}{|\alpha|} \quad\text{and}\quad \rho_2\triangleq
\frac{E[U_{2,1}U_{2,2}^{*}]}{\sqrt{A_{2,1}A_{2,2}}}
=\frac{\alpha^{*}}{|\alpha|}. \label{eq:rho}
\end{align}

Now, we describe a simplified DPC secrecy rate region based on the setting
(\ref{eq:aa})-(\ref{eq:rho}). Let
\begin{align}
&f_1(A_{1,1},A_{1,2},A_{2,1},A_{2,2})=\notag\\
&\quad\frac{[1+|\hvt|^2(A_{1,1}+A_{2,1})]f_2(A_{2,1},A_{2,2})}
{1+|\gvt|^2\left[\sqrt{ \beta
A_{1,1}}-\sqrt{(1-\beta)A_{1,2}}\right]^2+|\gvt|^2\left[\sqrt{ \beta
A_{2,1}}+\sqrt{(1-\beta)A_{2,2}}\right]^2}
\end{align}
and
\begin{align}
f_2(A_{2,1},A_{2,2})=\frac{1+|\gvt|^2\left[\sqrt{ \beta
A_{2,1}}+\sqrt{(1-\beta) A_{2,2}}\right]^2}{1+|\hvt| A_{2,1}}
\end{align}
where $\beta=|\alpha|^2$.
\begin{lemma}\label{lem:SDPC} {\rm [simplified DPC region]}
Let  ${\Rc}_{\rm I}^{\rm [S-DPC]}(A_{1,1},A_{1,2},A_{2,1},A_{2,2})$ denote the
union of all $(R_1,R_2)$ satisfying
\begin{align}
0&\le R_1\le \log_2 f_1(A_{1,1},A_{1,2},A_{2,1},A_{2,2})
\end{align}
and
\begin{align}
0&\le R_2\le \log_2 f_2(A_{2,1},A_{2,2}).
\end{align}
Then, any rate pair
\begin{align}
(R_1,R_2)\in {\rm co} \left\{\bigcup_{A_{1,1}+A_{1,2}+A_{2,1}+A_{2,2}\le A}
\Rc_{\rm I}(A_{1,1},A_{1,2},A_{2,1},A_{2,2})\right\} \label{eq:s-dpc-r}
\end{align}
is achievable for the multi-antenna Gaussian broadcast channel with
confidential messages.
\end{lemma}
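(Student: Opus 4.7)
The plan is to obtain Lemma~\ref{lem:SDPC} as a direct specialization of Theorem~\ref{thm:GBCin}, by restricting the DPC covariance matrices $K_{U_1}$ and $K_{U_2}$ to $2\times 2$ Hermitian matrices whose diagonal entries are $A_{i,1},A_{i,2}$ and whose normalized off-diagonal correlation coefficients equal the $\rho_i$ in (\ref{eq:rho}). Under this parameterization, the trace constraint ${\rm tr}(K_{U_1}+K_{U_2})\le A$ becomes $A_{1,1}+A_{1,2}+A_{2,1}+A_{2,2}\le A$, so the task reduces to verifying that the two logarithmic bounds in Theorem~\ref{thm:GBCin} collapse to $\log_2 f_1$ and $\log_2 f_2$, respectively, after which the convex hull in (\ref{eq:s-dpc-r}) is inherited from the convex hull in (\ref{eq:dpc-r}) over this restricted subfamily.

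First I would compute the two scalar quadratic forms that enter the DPC bounds. Because $\hvt=|\hv|[1,0]^H$, only the $(1,1)$-entry of $K_{U_i}$ contributes, yielding $\hvt^H K_{U_i}\hvt=|\hvt|^2 A_{i,1}$. Expanding the quadratic form against $\gvt$ produces
\begin{equation*}
\gvt^H K_{U_i}\gvt = |\gvt|^2\left[\beta A_{i,1}+2\sqrt{1-\beta}\,{\rm Re}(\alpha\rho_i)\sqrt{A_{i,1}A_{i,2}}+(1-\beta)A_{i,2}\right].
\end{equation*}
The specific phase choices $\rho_1=-\alpha^{*}/|\alpha|$ and $\rho_2=\alpha^{*}/|\alpha|$ give ${\rm Re}(\alpha\rho_1)=-\sqrt{\beta}$ and ${\rm Re}(\alpha\rho_2)=+\sqrt{\beta}$, so the bracketed expressions collapse to the perfect squares $\bigl[\sqrt{\beta A_{i,1}}\mp\sqrt{(1-\beta)A_{i,2}}\bigr]^2$ that appear inside $f_1$ and $f_2$.

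With these identifications in hand, the $R_2$ bound of Theorem~\ref{thm:GBCin} immediately becomes $\log_2\bigl[(1+\gvt^H K_{U_2}\gvt)/(1+\hvt^H K_{U_2}\hvt)\bigr]=\log_2 f_2(A_{2,1},A_{2,2})$. For $R_1$, I would combine the difference of logs in Theorem~\ref{thm:GBCin} into a single logarithm and match the resulting ratio to $f_1$: the denominator of $f_1$ equals $1+\gvt^H(K_{U_1}+K_{U_2})\gvt$ after expansion, while its numerator $[1+|\hvt|^2(A_{1,1}+A_{2,1})]f_2$ reproduces $[1+\hvt^H(K_{U_1}+K_{U_2})\hvt](1+\gvt^H K_{U_2}\gvt)/(1+\hvt^H K_{U_2}\hvt)$, so the cancellation is exact. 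The main pitfall will be the quadratic form computation against $\gvt$, where one must carefully track the complex conjugates and recognize that the phase choices in (\ref{eq:rho}) are precisely the ones that extremize the cross term so as to produce the perfect-square forms required by $f_1$ and $f_2$; beyond that, the argument is routine algebra.
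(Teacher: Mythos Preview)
Your approach is correct and is exactly what the paper intends: the paper gives no separate proof of Lemma~\ref{lem:SDPC}, presenting it as an immediate specialization of Theorem~\ref{thm:GBCin} under the covariance choices (\ref{eq:aa})--(\ref{eq:rho}), which is precisely the computation you outline. Your evaluation of $\hvt^{H}K_{U_i}\hvt$ and $\gvt^{H}K_{U_i}\gvt$ and the resulting perfect-square identifications are accurate (note that the $|\hvt|A_{2,1}$ in the denominator of the paper's $f_2$ is a typo for $|\hvt|^2A_{2,1}$, consistent with your derivation).
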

\begin{remark}
To calculate the simplified DPC secrecy rate region, we need only to allocate
the total power into the precoded signals $u_{1,1}$, $u_{1,2}$, $u_{2,1}$ and
$u_{2,2}$. This significantly reduces the computational complexity.
\end{remark}

\subsection{A Special Case}

A special case of the MGBC-CM model is the Gaussian MISO wiretap channel, where
the transmitter sends confidential information to only one user (e.g., user~2)
and treats another user (e.g., user 1) as an eavesdropper. In this case, we set
$K_{U_1}=0$, so that $K_{S_1,S_2}=K_{U_2}$. Now, applying
Theorems~\ref{thm:outG} and \ref{thm:GBCin}, we obtain the following upper and
lower bounds on the secrecy capacity of the Gaussian MISO wiretap channel:
\begin{align}
\Cc_{\rm MISO}&\le\min_{0\le|\nu|\le1}\max_{{\rm tr}(K_{S_1,S_2})\le A}
\log_2\frac{K_{\Yt_1,\Yt_2}} {(1-|\nu|^2)(1+\hvt^{H}K_{S_1,S_2}\hvt)}
\end{align}
and
\begin{align}
\Cc_{\rm MISO}&\ge\max_{{\rm tr}(K_{S_1,S_2})\le A}\log_2 (1+\gvt^{H}
K_{S_1,S_2} \gvt)(1+\hvt^{H} K_{S_1,S_2} \hvt)^{-1}.
\end{align}
It can been verified that the two bounds meet to describe the secrecy capacity
of the MISO wiretap channel, which is consistent with the result in
\cite{Wornell:ISIT:07}. In other words, the corner points of the Sato-type
outer bound (\ref{eq:Sato2}) and the DPC achievable rate region
(\ref{eq:dpc-r}) are identical.

\section{Numerical Examples} \label{sec:ex}

In this section, we study two numerical examples to illustrates the secrecy
rate region of the MGBC-CM. For simplicity, we assume that the GBC has real
input and output alphabets and $\alpha$ is real too. Under this setting, all
calculated secrecy rate values are divided by $2$.

In the first example, we consider the following GBC with a large $\alpha$,
\begin{align}
y_1&=s_1+\zt_1\notag\\
y_2&=2(0.9s_1+\sqrt{1-0.81}s_2)+\zt_2 \label{eq:ex1}
\end{align}
i.e., $|\hvt|=1$, $|\gvt|=2$ and $\alpha=0.9$. The total power constraint is
set to $A=10$.
\begin{figure}[t]
 \centerline{\includegraphics[width=0.8\linewidth,draft=false]{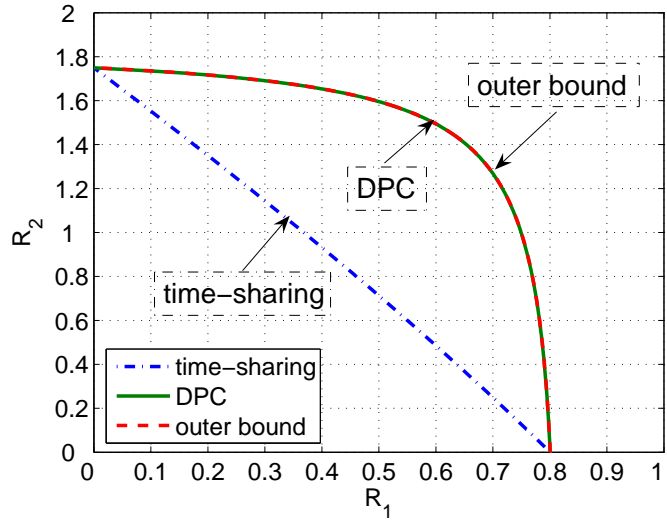}}
\caption{Comparison of the Sato-type outer bound and secrecy rate regions
achieved by time-sharing and simplified DPC schemes for the example MGBC-CM in
(\ref{eq:ex1})}
  \label{fig:sim1}
\end{figure}
Fig.~\ref{fig:sim1} depicts inner and outer bounds on the secrecy capacity for
the example MGBC-CM described in (\ref{eq:ex1}). We compare the Sato-type outer
bound (indicated by the dashed line) with the secrecy rate regions achieved by
the simplified DPC coding scheme (indicated by the solid line) and the
time-sharing scheme (indicated by the dash-dot line). Surprisingly, we observe
that not only the corner points but also the boundary of the simplified DPC
secrecy rate region (\ref{eq:s-dpc-r}) is identical with the Sato-type outer
bound (\ref{eq:Sato2}). Furthermore, Fig.~\ref{fig:sim1} demonstrates that the
time-sharing scheme is strictly suboptimal for providing the secrecy capacity
region.

\begin{figure}[t]
 \centerline{\includegraphics[width=0.8\linewidth,draft=false]{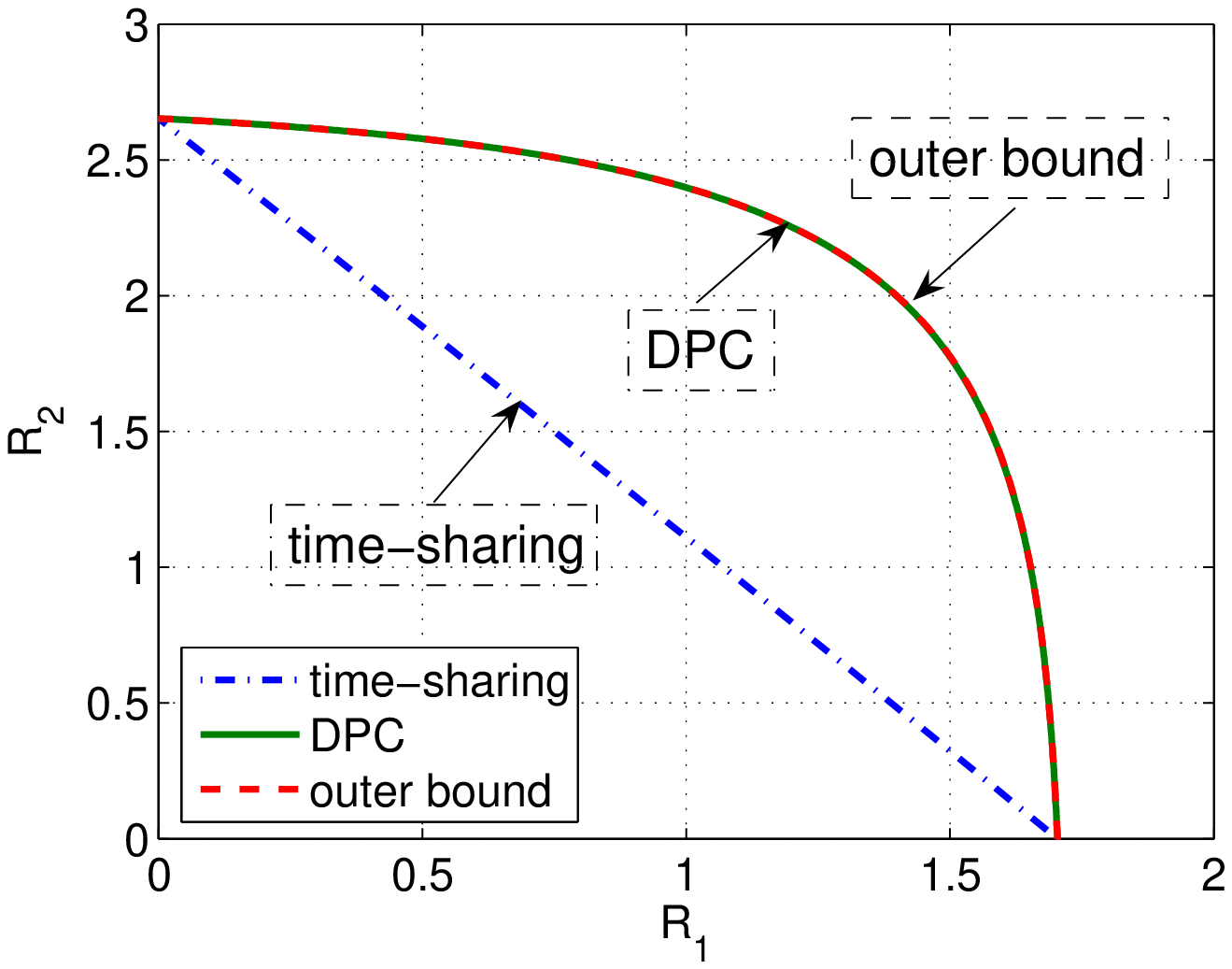}}
\caption{Comparison of the Sato-type outer bound and secrecy rate regions
achieved by time-sharing and simplified DPC schemes for the example MGBC-CM in
(\ref{eq:ex2})}
  \label{fig:sim2}
\end{figure}
In the second example, we consider a broadcast channel with a small $\alpha$
\begin{align}
y_1&=s_1+\zt_1\notag\\
y_2&=2(0.2s_1+\sqrt{1-0.04}s_2)+\zt_2 \label{eq:ex2}
\end{align}
i.e., $|\hvt|=1$, $|\gvt|=2$ and $\alpha=0.2$. The total power $A=10$.
Fig.~\ref{fig:sim2} illustrates that, again, the boundary of the simplified DPC
secrecy rate region (\ref{eq:s-dpc-r}) is consistent with the Sato-type outer
bound (\ref{eq:Sato2}).

\section{Conclusion}
In this paper, we have investigated outer and inner bounds on the secrecy
capacity region of a generally non-degraded Gaussian broadcast channel with
confidential messages for two users, where the transmitter has $t$ antennas and
each user has one antenna. For this model, we have introduced a computable
Sato-type outer bound and proposed a dirty-paper secure coding scheme. Using
numerical examples, we have illustrated that the boundary of the simplified DPC
secrecy rate region is consistent with the Sato-type outer bound.

Based on this observation, we conjecture that the dirty-paper secure coding
strategy achieves the secrecy capacity region of the MGBC-CM.

\section*{Appendix}

\begin{proof}  {\bf(Theorem~\ref{thm:out1})}
Here we prove Theorem~\ref{thm:out1} and derive the outer bound for $R_1$. The
outer bound for $R_2$ follows by symmetry.

The secrecy requirement (\ref{eq:equiv}) implies that
\begin{align}
nR_1= H(W_1) &\le H(W_1|Y_2^{n},W_2)+n\epsilon  \label{eq:r1}.
\end{align}
On the other hand, Fano's inequality and $P_e\le \epsilon$ imply that
\begin{align}
H(W_1|Y_1^{n}) &\le \epsilon \log(M_1-1)+h(\epsilon) \triangleq n\delta_1.
\label{eq:sd1}
\end{align}
where $h(x)$ is the binary entropy function. Now, we can bound (\ref{eq:r1}) as
follows
\begin{align}
nR_1& \le H(W_1|Y_2^{n})+n\epsilon \label{eq:rele} \\
&\le H(W_1|Y_2^{n})-H(W_1|Y_1^{n})+n(\delta_1+\epsilon) \label{eq:cr01} \\
&\le H(W_1|Y_2^{n})-H(W_1|Y_1^{n},Y_2^{n})+n(\delta_1+\epsilon) \\
&= I(W_1;Y_1^{n}|Y_2^{n})+n(\delta_1+\epsilon)\\
&\le I(\Xv^{n};Y_1^{n}|Y_2^{n})+n(\delta_1+\epsilon)\\
%&=\sum_{i=1}^{n} I(\Xv^{n};Y_{1,i}|Y_2^{n},Y_2^{i-1})+n(\delta_1+\epsilon) \\
&\le \sum_{i=1}^{n} I(\Xv_i;Y_{1,i}|Y_{2,i})+n(\delta_1+\epsilon).
\end{align}

Note that the decoding error probability and the equivocation rate at each user
depend only on the marginal probability densities $P(y_1|\xv)$ and
$P(y_2|\xv)$. Hence, one can replace $Y_1$ and $Y_2$ by $\Yt_1$ and $\Yt_2$.
Therefore, we have the desired result.
\end{proof}

\begin{proof}  {\bf(Theorem~\ref{thm:GBCin})}
We first check the power constraint. Since $U_1$ and $U_2$ are independent and
$$\sv=[s_1,s_2]^{\top}=\uv_1+\uv_2,$$ the covariance matrices $K_{U_1}$ and $K_{U_2}$
satisfy
\begin{align} {\rm tr}(K_{U_1}+K_{U_2})={\rm tr}(K_{S_1,S_2})\le A.
\end{align}

Following from \cite[Theorem~1]{Yu:IT:04} and using the setting in
(\ref{eq:rvs}), we can immediately obtain the well-known \emph{successive
encoding} result:
\begin{align}
I(V_1;Y_1)&=\log_2 \frac{1+\hvt^{H} (K_{U_1}+K_{U_2}) \hvt}{1+\hvt^{H} K_{U_2}
\hvt}\label{eq:pf1}
\end{align}
and
\begin{align}
I(V_2;Y_2)-I(V_1;V_2)&= \log_2 (1+\gvt^{H} K_{U_2} \gvt).
\end{align}
Since $V_1=U_1$ is independent of $U_2$ and $V_2=U_2+\bv\gvt^{H} U_1$, we have
\begin{align}
I(V_2;Y_1|V_1)&=I(U_2+\bv\gvt^{H} U_1;\,Y_1|U_1)\notag\\
&=I(U_2;Y_1|U_1) \\
&=\log_2(1+\hvt^{H} K_{U_2} \hvt).
\end{align}
Moreover, we note that
\begin{align}
I(V_1;Y_2|V_2)+I(V_1;V_2)&=I(V_1,V_2;Y_2)-[I(V_2;Y_2)-I(V_1;V_2)]\notag\\
&=\log_2\frac{1+\gvt^{H} (K_{U_1}+K_{U_2}) \gvt}{1+\gvt^{H}
K_{U_2}\gvt}.\label{eq:pf2}
\end{align}
Substituting (\ref{eq:pf1})-(\ref{eq:pf2}) into (\ref{eq:BC-IN-R1}) and
(\ref{eq:BC-IN-R2}), we have the desired result.
\end{proof}

\bibliographystyle{IEEEtran}
\bibliography{secrecy}

\end{document}